%
\documentclass[11pt,conference]{IEEEtran}
\usepackage{times}
\usepackage{epsfig}
\usepackage{amsmath}
\usepackage{amsfonts}
\usepackage{graphicx}
\usepackage{amssymb}
\usepackage{amstext}
\usepackage{latexsym}
\usepackage{color}
\usepackage{ifthen}
\usepackage{multirow}
\usepackage{subfigure}

%
%
%
%


%
%
%
%

%
%

%
%
\newcommand{\be}{\begin{equation}}
\newcommand{\ee}{\end{equation}}
\newcommand{\bear}{\begin{eqnarray}}
\newcommand{\eear}{\end{eqnarray}}
\newcommand{\bears}{\begin{eqnarray*}}
\newcommand{\eears}{\end{eqnarray*}}
\newcommand{\bi}{\begin{itemize}}
\newcommand{\ei}{\end{itemize}}
\newcommand{\ben}{\begin{enumerate}}
\newcommand{\een}{\end{enumerate}}

\newcommand{\beq}{\begin{equation}}
\newcommand{\eeq}{\end{equation}}

\newcommand{\lp}{ \left(}
\newcommand{\rp}{ \right)}

\newtheorem{theorem}{Theorem}[section]

\newtheorem{lemma}[theorem]{Lemma}


%
%


\newcommand{\xbf}{\mbox{${\bf X }$} }
\newcommand{\ybf}{\mbox{${\bf Y }$} }

\newcommand{\Prob}{\mbox{${\mathbb P}$} }

\newcommand{\prob}[1]{\Prob \left\{ #1 \right\}}




%

%
%


\newcommand{\SNR}{{\sf SNR}}

\begin{document}

\title{Diversity-Multiplexing Tradeoff of the Half-Duplex Relay Channel}

\author{\authorblockN{Sameer Pawar}
\authorblockA{Wireless Foundations\\
 UC Berkeley,\\
 Berkeley, California, USA.\\
{\sffamily sameerpawar@berkeley.edu}} \and
\authorblockN{Amir Salman Avestimehr}
\authorblockA{Wireless Foundations\\
 UC Berkeley,\\
 Berkeley, California, USA.\\
{\sffamily avestime@eecs.berkeley.edu}} \and
\authorblockN{David N C. Tse}
\authorblockA{ Wireless Foundations\\
 UC Berkeley,\\
 Berkeley, California, USA.\\
{\sffamily dtse@eecs.berkeley.edu}}
 }

\maketitle

\begin{abstract}
We show that the diversity-multiplexing tradeoff of a half-duplex single-relay channel
with identically distributed Rayleigh fading channel gains meets the $2$ by $1$ MISO bound.
We generalize the result to the case when there are $N$ non-interfering relays and show that the diversity-multiplexing tradeoff is equal to the $N+1$ by $1$ MISO bound.
\end{abstract}

\section{Introduction}
Cooperation between nodes can provide both diversity and degree of
freedom gain in wireless fading channels \cite{SenErkAaz1,
SenErkAaz2, LTW1}. The diversity-multiplexing tradeoff (DMT) was a
metric introduced by Zheng and Tse \cite{ZheTse} to evaluate
simultaneously the diversity and degrees of freedom gain in general
fading channels. Significant effort has been spent in the past few
years in computing the DMT of cooperative relay networks. The
simplest such network has one relay and a direct link between the
source and the destination (Fig. \ref{fig:relay_channel}, with the
channel gains modeled as quasi-static identically distributed
Rayleigh faded and known only to the respective receive node. A
simple upper bound to performance is the DMT of the $2$ by $1$ MISO
channel obtained when the source and relay can fully cooperate to
transmit to the destination:
\[
d(r) = 2(1-r) \ \ \ \ \ \ \ \ 0\leq r \leq 1
\]

It is quite easy to see that this upper bound can be achieved if the relay can operate on a full-duplex mode, i.e. transmit and receive at the same time. But most radios can only operate on a half-duplex mode. Somewhat surprisingly, the DMT for the half-duplex single-relay network is still an open problem despite substantial effort.

\begin{figure}[ht]
\begin{center}
\includegraphics[width = 1\columnwidth]{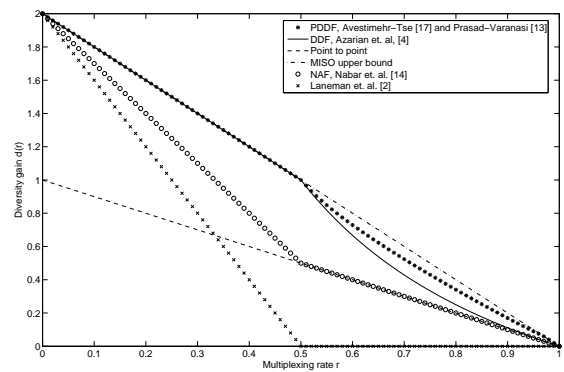}
\caption{Diversity multiplexing tradeoff of several schemes for the half-duplex relay channel}\label{fig:state_of_art}
\end{center}
\end{figure}

Figure \ref{fig:state_of_art} shows the DMT performance of several schemes and how they compare to
the MISO bound. We see that none of the schemes achieves the bound for the entire range of
multiplexing gains. The dynamic-decode-and-forward \cite{AzGamSch} and partial decode-and-forward
\cite{PraVar,AT} schemes achieves the MISO DMT for multiplexing gains $r \le 0.5$ but there is a gap
for $r > 0.5$. Is this gap fundamental or is there a better scheme?

In this paper, we show that indeed there is a scheme that achieves the MISO DMT for {\em all} multiplexing gains $r$ up to $1$.  The problem with decode-and-forward schemes is that for $r > 0.5$, it takes too long for the relay to decode the whole message and there is not enough time for it to forward information. The problem with partial-decode-and-forward scheme is that the source does not know how to split the overall message without knowing the instantaneous channel gains of the various channels. In contrast, the scheme
that we propose, which we call {\em quantize-and-map}, does not decode or partially decode the message. Instead, the relay extracts the significant bits of the received signal above noise level by quantization and re-encodes them to forward to the relay. The destination then combines the received signal from the relay and the direct signal from the source to solve for the information bits. Because there is no need to decode any message, there is also no need for any dynamic adaptation of the listening period for the relay. In fact, it turns out that it suffices for the relay to always listen half of the time and talk half of the time regardless of the channel state.

The quantize-and-map scheme is based on a recent deterministic approach to approximate the capacity of Gaussian relay networks \cite{ADT071,ADT072,ADTISIT08,ADTIT08}. Inspired by the optimal scheme that was found for the deterministic relay networks \cite{ADT071,ADT072}, the quantize-and-map scheme was shown in \cite{ADTISIT08,ADTIT08} to achieve within a constant gap of the capacity of arbitrary Gaussian relay networks, where the constant gap does not depend on the channel parameters. A key observation is that since the scheme does not require any channel information at the nodes, it can also be utilized in a fading scenario in which there is no channel state information available at the transmitter. Now since at high SNR and high rates the approximation gap is negligible, as a corollary one can show that for any listen-transmit schedule, this scheme achieves the diversity-multiplexing tradeoff of the cut-set bound on the capacity. The desired result is obtained when this fact is combined with the observation that the DMT of the cutset bound of the half duplex network matches that of the MISO bound when the relay listens half of the time and talks the other half. This result can also be generalized to more than $1$ relay when these relays have no link between themselves.




\section{System Model}\label{sec:relay_channel_model} Consider a
network as shown in Figure \ref{fig:relay_channel} with a source
$S$, a destination $D$, and one relay node $R$.

\begin{figure}[h]
\begin{center}
\input{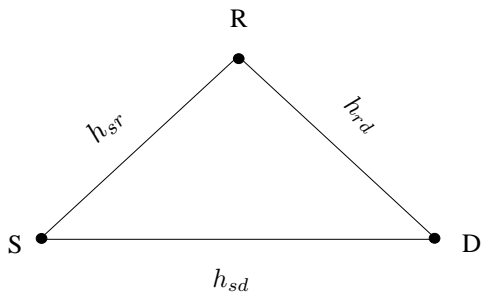}
\caption{The relay channel. \label{fig:relay_channel}}
\end{center}
\end{figure}

All the channel links $h_{sd},h_{sr},h_{rd}$ are assumed to be
flat-fading, i.i.d complex normal $\mathcal{CN}(0,1)$ distribution. It is assumed
that although random, once realized, channel gains remain unchanged
for the duration of the codeword and change independently from one
codeword to another i.e., quasi-static fading. Noise at all of the
receivers is additive i.i.d $\mathcal{CN}(0,1)$ independent of any
other form of randomness in the system. All nodes have single
antenna and have equal average power constraint specified by average
Signal to Noise Ratio (SNR), denoted by $\rho$. Relay node $R$ is assumed
to be in half-duplex operation and for simplicity it is assumed that
transmission of source and relay are synchronous at symbol level. Furthermore, channel state information (CSI) is only available at the receivers. So, relay has CSI about $h_{sr}$, destination
has CSI about $h_{sd},h_{rd}$ and no CSI at all at the source.

\section{Diversity-multiplexing tradeoff of the half-duplex relay channel}
\label{sec:single_relay} In this section we characterize the
diversity-multiplexing tradeoff of the half-duplex relay channel, described in section \ref{sec:relay_channel_model}.
First we describe the quantize-map relaying scheme that we proposed
earlier in \cite{ADTISIT08} and \cite{ADTIT08}. As we showed in
these references, this relaying scheme achieves a rate within a
constant gap to the cut-set upper bound of the capacity of the relay
channel for all channel gains, where the constant is independent of the channel
$\SNR$s. Furthermore, since this relaying scheme does not require any
channel information at the source and the relay, it can also
be performed in our scenario (i.e. no CSI at the transmitter). Now, since
at high $\SNR$ and high rates the approximation gap is negligible,
as a corollary we will show that this scheme achieves the
diversity-multiplexing tradeoff of the cut-set bound on the capacity
for any listen-transmit scheduling at the relay. Finally we
illustrate that a fixed scheduling that relay listens only half the
time and transmits the rest is enough to achieve the
diversity-multiplexing tradeoff of the $2 \times 1$ MISO channel, hence we find the optimal DMT of the half-duplex relay channel.

\subsection{Description of the relaying scheme}
\label{subsec:Enc} We have a single source $S$ with a sequence of
messages $w_k \in\{1,2,\ldots,2^{KTR}\}$, $k=1,2,\ldots$ to be
transmitted. At both the relay and the source we create random
Gaussian codebooks. Source randomly maps each message to one of its
Gaussian codewords and sends it in $KT$ transmission times (symbols)
giving an overall transmission rate of R. Due to half-duplex nature
of the relay, it has to do listen-transmit cycles. Relay operates over
blocks of time $T$ symbols and since total length of codeword at
source is $KT$ we have $K$ blocks in each codeword. Relay listens to
the first $Tt$ ($0\leq t \leq 1$) time symbols of each block. Let
$X^{1(k)}_{S}$ denote the sequence of these $Tt$ symbols transmitted at the source in block $k$. Also let $\ybf_R^{(k)}$ and $\ybf^{1(k)}_D$ be the received signal at relay
and destination respectively during this time. Then the relay it quantizes its received signal in the first $tT$ time symbols to $\hat{\ybf}_R^{(k)}$ which is then randomly mapped into a
Gaussian codeword $\xbf_R^{(k)}$ using a random mapping function
$f_R(\hat{ \ybf}_R^{(k)})$ and sends it in the next $T(1-t)$ time
symbols. Let $\ybf^{2(k)}_D$ denote the sequence of symbols received
by destination during this time. Given the knowledge of all the encoding functions at the relay and
signals received over $K$ blocks, the decoder D, attempts to decode the message sent by the source.

\subsection{DMT of the relaying scheme}\label{sec:DMT_relay_channel}
For any fixed listen-transmit scheduling strategy (i.e. fixed $t$), the cut-set upper bound on the capacity of the half-duplex Gaussian relay channel, $\overline{C}_{hd}$, is given by  (\ref{eq:CutSet}) on the top of next page \cite{khojastepour} .

\begin{table*}
\begin{eqnarray} \label{eq:CutSet}  \overline{C}_{hd}(h_{sr},h_{sd},h_{rd},\rho,t) &=& \max_{p(x_S^{1},x_S^{2},x_R)} \min  \{ t I(X_S^1;Y_R,Y_D^1|X_R)+ (1-t)I(X_S^2;Y_D^2|X_R),  t I(X_S^1; Y_D^1)+(1-t) I(X_S^2,X_R; Y_D^2)  \} \\ \nonumber &\leq& \min \{ t\left(\log (1 + \rho(|h_{sr}|^2 + |h_{sd}|^2))\right)  + (1-t)\left(\log(1 + \rho |h_{sd}|^2\right) , (1-t)\left(\log (1 + \rho(|h_{rd}|  + |h_{sd}|)^2)\right)  \\ \label{eq:CutSetE} && \quad \quad  + t\left(\log(1 + \rho |h_{sd}|^2\right) \}
\end{eqnarray}
\end{table*}

Now, as we showed in \cite{ADTIT08}, for any fixed listen-transmit
scheduling, the quantize-map relaying scheme described in Section \ref{subsec:Enc}, uniformly
achieves a rate within a constant gap to the capacity. Therefore by
Theorem 4.7 in \cite{ADTIT08}, for all channel gains we have,
{\small
\beq \label{eq:QMRate}
\overline{C}_{hd}(h_{sr},h_{sd},h_{rd},\rho,t) -\kappa \leq
R_{\text{quantize-map}}(h_{sr},h_{sd},h_{rd},\rho,t)  \eeq} where
$\kappa \leq 15$ is a constant that does not depend on the channel
gains and $\SNR$.

Now since this relaying scheme does not require any channel information at the source and the relay, it can also be performed in our scenario in which there is no channel state information available at the transmitter. Furthermore, as at high $\SNR$ and high data rates the approximation gap is negligible, as a corollary we will now show that for any fixed listen-transmit scheduling, this scheme achieves the diversity-multiplexing tradeoff of the cut-set bound.

\begin{theorem}\label{thm:achievability} For any fixed scheduling $t$, the quantize-map relaying scheme achieves the diversity-multiplexing tradeoff of $\overline{C}_{hd}$, where $\overline{C}_{hd}$ is defined by (\ref{eq:CutSet}).
\end{theorem}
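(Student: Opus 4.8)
The plan is to sandwich the instantaneous rate of the quantize-map scheme between the cut-set bound and a constant-shifted version of it, and then argue that an additive constant offset in the target rate is invisible at the scale of the diversity-multiplexing exponent. Recall that for a target multiplexing gain $r$ we set the transmission rate to $R = r\log\rho$, and the diversity gain of a scheme with supportable rate $R_{\text{sch}}$ is
\beq
d_{\text{sch}}(r) = -\lim_{\rho\to\infty}\frac{\log \prob{R_{\text{sch}}(h_{sr},h_{sd},h_{rd},\rho,t) < r\log\rho}}{\log\rho},
\eeq
the normalized exponent of the outage probability. Writing $d_{\text{QM}}(r)$ and $d_{\text{CS}}(r)$ for the diversity functions of $R_{\text{quantize-map}}$ and $\overline{C}_{hd}$, the claim is that $d_{\text{QM}}(r) = d_{\text{CS}}(r)$ for all $0\le r\le 1$.

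First I would dispose of the easy inequality. Since $\overline{C}_{hd}$ is the cut-set upper bound, any achievable rate satisfies $R_{\text{quantize-map}} \le \overline{C}_{hd}$ pointwise in the channel gains, so the event $\{\overline{C}_{hd} < R\}$ is contained in $\{R_{\text{quantize-map}} < R\}$ and hence $\prob{R_{\text{quantize-map}} < R} \ge \prob{\overline{C}_{hd} < R}$. Taking normalized exponents gives $d_{\text{QM}}(r) \le d_{\text{CS}}(r)$. For the reverse inequality I would invoke the constant-gap guarantee (\ref{eq:QMRate}), valid for every realization of the gains: $R_{\text{quantize-map}} \ge \overline{C}_{hd} - \kappa$ with $\kappa\le 15$ independent of the gains and of $\rho$. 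Therefore $\{R_{\text{quantize-map}} < R\} \subseteq \{\overline{C}_{hd} < R + \kappa\}$, and so
\beq
\prob{R_{\text{quantize-map}} < r\log\rho} \le \prob{\overline{C}_{hd} < r\log\rho + \kappa}.
\eeq
The right-hand side is the cut-set outage probability at the shifted target $r\log\rho + \kappa = r'(\rho)\log\rho$, with effective multiplexing gain $r'(\rho) = r + \kappa/\log\rho \to r$ as $\rho\to\infty$.

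The crux, and the step I expect to be the main obstacle, is to show that this vanishing perturbation of the multiplexing gain leaves the diversity exponent unchanged, i.e. that the exponent of $\prob{\overline{C}_{hd} < r'(\rho)\log\rho}$ converges to $d_{\text{CS}}(r)$. The clean route is to establish continuity of the cut-set DMT curve $r\mapsto d_{\text{CS}}(r)$. Substituting $|h_{\bullet}|^2 = \rho^{-v_\bullet}$ into the right-hand bound of (\ref{eq:CutSet}) and applying the large-deviations (Laplace) argument of Zheng--Tse, the outage region is described by finitely many affine inequalities in the exponential-order variables $v_\bullet = -\log|h_\bullet|^2/\log\rho$, and $d_{\text{CS}}(r) = \inf \sum_\bullet v_\bullet$ over that region. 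This is the value of a linear program whose constraint offsets depend continuously (affinely) on $r$, hence $d_{\text{CS}}$ is piecewise-linear and continuous in $r$. Continuity then forces $d_{\text{CS}}(r'(\rho)) \to d_{\text{CS}}(r)$, giving $d_{\text{QM}}(r) \ge d_{\text{CS}}(r)$.

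Combining the two bounds yields $d_{\text{QM}}(r) = d_{\text{CS}}(r)$, as required. I would remark that a more hands-on variant avoids invoking continuity entirely: since $\kappa$ is a fixed constant while the threshold $r\log\rho$ grows without bound, the additive shift displaces each boundary of the outage region by $O(1/\log\rho)$ in the $v$-coordinates, so the infimum of $\sum_\bullet v_\bullet$ over the region changes by $O(1/\log\rho)$ and therefore not at all in the limit. Either way, the constant gap $\kappa$ of the quantize-map scheme is what makes the argument go through, because it enters only additively and thus disappears from the multiplexing-gain normalization.
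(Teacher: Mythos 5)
Your proposal takes essentially the same route as the paper's proof: use the constant-gap guarantee (\ref{eq:QMRate}) to upper-bound the outage probability of the quantize-map scheme by the probability that the cut-set bound falls below the target rate (shifted by $\kappa$), then argue that the additive constant $\kappa$ cannot affect the exponent as $\rho\to\infty$. The only differences are matters of completeness rather than approach: you explicitly prove the trivial converse inequality $d_{\text{QM}}(r)\le d_{\text{CS}}(r)$, and you make rigorous (via continuity of the piecewise-linear cut-set DMT curve) the step that the paper disposes of with the one-line remark that $\kappa$ does not scale with $\rho$.
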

\begin{proof}
Assume a targeted communication rate $R$. By (\ref{eq:QMRate}), we know  that the destination will be able to decode the information sent by the source as long as
\beq \overline{C}_{hd}(h_{sr},h_{sd},h_{rd},\rho,t) -\kappa>R \eeq
Therefore for any scheduling $t$, we have
\beq \label{eq:outageUB} \mathcal{P}_{\text{outage}}(\rho) \leq \prob{\overline{C}_{hd} -\kappa<R}\eeq where the probability is calculated over the randomness of channel gain realizations.
Now by definition, for any scheduling $t$,  the achievable diversity of quantize-map scheme is
\begin{eqnarray}
d_{QM}(r) &=& - \lim_{\rho \rightarrow \infty} \frac{\log \lp \mathcal{P}_{\text{outage}}(\rho) \rp }{\log \rho} \\
& \stackrel{(\ref{eq:outageUB})}{\geq} & - \lim_{\rho \rightarrow \infty} \frac{\log \lp  \prob{\overline{C}_{hd} -\kappa<r \log \rho} \rp }{\log \rho}  \\
& \stackrel{*}{\approx} & - \lim_{\rho \rightarrow \infty} \frac{\log \lp  \prob{\overline{C}_{hd} <r \log \rho} \rp }{\log \rho}
\end{eqnarray}
where $*$ is true since $\kappa$ is a constant and does not scale with $\rho$. Therefore for any $t$, the quantize-map relaying strategy achieves  the diversity-multiplexing tradeoff of $\overline{C}_{hd}(t)$.
\end{proof}

Next, we will show that with $t=0.5$, the diversity-multiplexing
tradeoff of $\overline{C}_{hd}$ matches the diversity-multiplexing
tradeoff of the $2\times1$ MISO channel, and hence we complete the
proof of our main Theorem.

First we give some intuition on why this is true.
\begin{figure}[h]
\begin{center}
\input{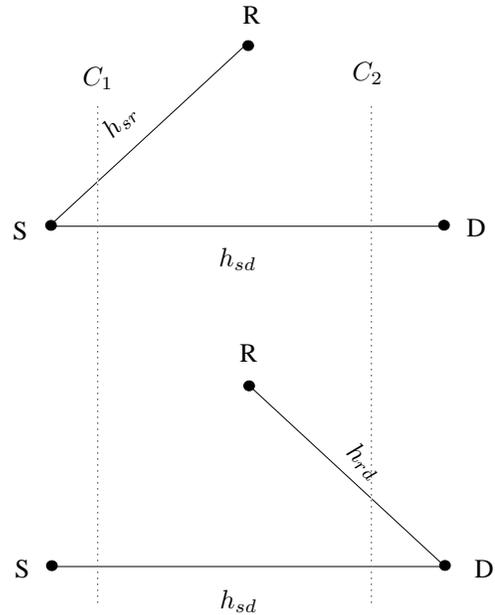}
\caption{Two scheduling modes of the system: relay listens $t$ fraction of the time and relay transmits $(1-t)$fraction of the time}
\label{fig:cut_set}
\end{center}
\end{figure}

First note that in equation (\ref{eq:CutSetE}), the first term corresponds to the
information flowing through cut $\{S\},\{R,D\}$ (see
Figure \ref{fig:cut_set}) and the second term corresponds to the
information flowing through the cut $\{S,R\},\{D\}$. Now, the value of the first
cut $\{S\},\{R,D\}$ corresponds to the capacity of a SIMO system with $1$ transmit
antenna and $2$ receive antennas where one receive antenna
(corresponding to relay) is listening only $t$ amount of time. Similarly,
the value of the second cut i.e., $\{S,R\},\{D\}$ corresponds to the capacity of a MISO system
with $2$ transmit antennas and $1$ receive antenna, where one
transmit antenna (corresponding to relay) is transmitting only $1-t$
amount of time. Since we are limited by the minimum of these two
values optimal strategy is to try to make them equal. Also since DMT
of $1 \times 2$ SIMO is same as that of $2 \times 1$ MISO, a natural choice is to set $t=0.5$.

Once we set $t=0.5$, DMT of cut-set bound is just DMT of a $2 \times
1$ MISO system with $1$ transmit antenna being used only half the
time, but this system is {\it{strictly better}} than a system with
$2$ transmit $1$ receive antennas and where each of the two transmit
antennas are used only half the time in an alternate fashion i.e., parallel
channel with rate $r$ on each channel. It is well known and easy to
compute that DMT of this parallel channel is $2(1-r)$. Also we have
obvious upper bound of DMT of MISO system which is again $2(1-r)$.
Thus the cut-set bound achieves the optimal DMT for $t=0.5$. The formal proof of this is given in Appendix \ref{app:AchMISO}.

\section{Extension to multiple-relay network}
In this section we extend our result to general multiple-relay networks. The listen-transmit scheduling model that we use to study this problem is the same as \cite{khojastepour}. In this model the network has finite modes of operation. Each mode of operation (or state of the network), denoted by $m \in \{1,2,\ldots,M \}$, is defined as a valid partitioning of the nodes of the network into two sets of "sender" nodes and "receiver" nodes such that there is no active link  that arrives at a sender node\footnote{Active link is defined as a link which is departing from the set of sender nodes}. For each node $i$, the transmit and the receive signal at mode $m$ are respectively shown by $x_i^m$ and $y_i^m$. Also $t_m$ defines the portion of the time that network will operate in state $m$, as the network use goes to infinity.
As shown in \cite{khojastepour}, the cut-set upper bound on the capacity of the Gaussian relay network with half-duplex constraint, $C_{hd}$, is given by (\ref{eq:CutSetGen}) on the top of next page.
\begin{table*}
\beq \label{eq:CutSetGen} C_{hd} \leq \overline{C}_{hd}= \max_{\substack{p(\{x_j^m\}_{j\in\mathcal{V}, m \in \{1,\ldots,M\}})\\ t_m:~ 0 \leq t_m \leq 1, ~\sum_{m=1}^M t_m=1 }} \min_{\Omega\in\Lambda_D} \sum_{m=1}^M t_m I(Y_{\Omega^c}^m;X_{\Omega}^m|X_{\Omega^c}^m)  \eeq
\end{table*}

Now we describe the quantize-map relaying scheme that we proposed in \cite{ADTISIT08,ADTIT08} for general half-duplex relay networks.

\subsection{Description of the relaying scheme}
\label{subsec:EncGen}
We have a single source $S$ with a sequence of
messages $w_k \in\{1,2,\ldots,2^{KTR}\}$, $k=1,2,\ldots$ to be
transmitted. At all nodes we create a random
Gaussian codebook. Source randomly maps each message to one of its
Gaussian codewords and sends it in $KT$ transmission times (symbols)
giving an overall transmission rate of R. Relays operate over blocks
of length $T$ symbols. Starting from the beginning of the block each relay $i$ spends a total of $t_mT$ symbols in state $m$, $m=1,\ldots,M$. In each state, if it is assigned to listen, it receives a
sequence $\ybf_i^{(k,m)}$. Otherwise, if it is assigned to transmit, it quantizes all received signals in the previous block (i.e. $\ybf_i^{(k-1,m)}$, $m=1,\ldots,M$) to $\hat{\ybf}_i^{(k,m)}$ which is
then randomly mapped into a Gaussian codeword $\xbf_i^{(k,m)}$ using a
random mapping function $f_i(\hat{ \ybf}_R^{(k,m)})$ and sends it in
that $t_mT$ time symbols. Given the knowledge of all the encoding functions at the relay and signals received over $K+|V|-2$ blocks, the decoder D, attempts to decode the message W sent by the source.

\subsection{DMT of the relaying scheme}\label{sec:DMT_N_relay}

As we showed in \cite{ADTIT08}, for any fixed listen-transmit
scheduling, the quantize-map relaying scheme described above,
achieves within a constant gap to the capacity. Therefore by
Theorem 4.7 in \cite{ADTIT08}, for all channel gains we have, \beq \label{eq:QMRateGen}
\overline{C}_{hd} -\kappa \leq R_{\text{quantize-map}}  \eeq where
$\kappa \leq 5|\mathcal{V}|$ is a constant that does not depend on
the channel gains and $\SNR$.

Therefore similar to Theorem \ref{thm:achievability} we can show the following theorem:
\begin{theorem}\label{thm:achievabilityGen} For any fixed scheduling, the quantize-map relaying scheme achieves the diversity-multiplexing tradeoff of $\overline{C}_{hd}$, where $\overline{C}_{hd}$ is defined by (\ref{eq:CutSetGen}).
\end{theorem}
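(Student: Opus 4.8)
The plan is to follow \emph{verbatim} the argument of Theorem~\ref{thm:achievability}, now invoking the multi-relay constant-gap guarantee (\ref{eq:QMRateGen}) in place of its single-relay counterpart. First I would fix a target multiplexing gain $r$ and operate at rate $R=r\log\rho$. By (\ref{eq:QMRateGen}), the destination decodes successfully whenever $\overline{C}_{hd}-\kappa>R$, so the outage probability of the scheme is bounded by
\beq \mathcal{P}_{\text{outage}}(\rho)\leq\prob{\overline{C}_{hd}-\kappa<R}, \eeq
where the probability is taken over the channel-gain realizations that enter $\overline{C}_{hd}$ through (\ref{eq:CutSetGen}).

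Next I would unwind the definition of the achievable diversity exactly as in the single-relay case:
\beq d_{QM}(r)=-\lim_{\rho\rightarrow\infty}\frac{\log\lp\mathcal{P}_{\text{outage}}(\rho)\rp}{\log\rho}\geq-\lim_{\rho\rightarrow\infty}\frac{\log\lp\prob{\overline{C}_{hd}-\kappa<r\log\rho}\rp}{\log\rho}. \eeq
The one fact that makes this work is that $\kappa\leq 5|\mathcal{V}|$ is a constant depending only on the (fixed) network size and \emph{not} on $\SNR$ or the channel gains; hence the additive shift by $\kappa$ is negligible on the scale of $\log\rho$ and may be dropped in the limit, yielding $d_{QM}(r)\geq d_{\overline{C}_{hd}}(r)$, the DMT of the cut-set bound. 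Since $\overline{C}_{hd}$ upper bounds the capacity of the network, no scheme can do better, so the matching converse $d_{QM}(r)\leq d_{\overline{C}_{hd}}(r)$ is immediate and the two bounds together give the claimed equality.

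The main point requiring care, rather than a genuine obstacle, is the justification that $\kappa$ may be treated as a constant in the high-$\SNR$ limit even though it grows linearly with $|\mathcal{V}|$: because the network is fixed throughout the DMT analysis, $|\mathcal{V}|$ is a constant, and the uniform (channel-independent, $\SNR$-independent) nature of the gap in (\ref{eq:QMRateGen}) is exactly what decouples it from the exponent. A secondary point is purely notational. The cut-set expression (\ref{eq:CutSetGen}) now involves a minimum over all source--destination cuts $\Omega\in\Lambda_D$ and a maximum over input distributions, but for a \emph{fixed} schedule $\{t_m\}$ these operations do not interact with the $\rho\rightarrow\infty$ limit, so the derivation is structurally identical to the single-relay argument, with the outage event being simply the event that some cut in (\ref{eq:CutSetGen}) falls below $r\log\rho$.
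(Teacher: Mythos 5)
Your proposal is correct and matches the paper's intended argument exactly: the paper gives no separate proof for this theorem, stating only that it follows ``similar to Theorem~\ref{thm:achievability}'', which is precisely the verbatim transcription you carry out --- bounding outage via the constant-gap guarantee (\ref{eq:QMRateGen}) and discarding the $\SNR$-independent constant $\kappa \leq 5|\mathcal{V}|$ in the high-$\SNR$ exponent. Your added remark that the converse direction is immediate (since $\overline{C}_{hd}$ upper bounds capacity) is a harmless strengthening beyond what the paper claims.
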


To find the optimal performance of this scheme, one should optimize over all possible scheduling strategies. In general we don't know the optimizing strategy, however as we show in Section \ref{subsec:Two_hop_Network}, in a special case of two hop network with $N$ non interfering relays, a fixed uniform scheduling (i.e. $t_m=2^{-N}$, $m=1,\ldots,2^N$) achieves the optimal DMT.

\subsection{Optimal DMT of two hop network with $N$ non-interfering half duplex relays} \label{subsec:Two_hop_Network} Consider a two hop network with single
source $S$, destination $D$ and $N$ half-duplex relays $R_i, \ \ 1
\leq i \leq N$, as shown in Figure \ref{fig:N_relay_channel}.

All the assumptions in section~\ref{sec:relay_channel_model} are
carried over with additional assumption that there is no link among
any two relays. Let $h_{sr_i}$ be the link from source $S$ to relay $i$ and $h_{r_id}$ be the link from $i^{\text{th}}$ relay to destination $D$.

Here is our main result for this relay network.
\begin{theorem} \label{thm:achievability_N_relay}  The optimal diversity-multiplexing tradeoff of a two-hop relay network, with $N$ non-interfering half-duplex relays is equal to
the diversity-multiplexing tradeoff of the $(N+1) \times 1$ MISO channel. Furthermore, it is achieved by the quantize-map relaying strategy define in Section \ref{subsec:EncGen} with fixed and uniform scheduling, $t_m=2^{-N}$, $m=1,\ldots,2^N$.
\end{theorem}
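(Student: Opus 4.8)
The plan is to prove the theorem in two halves: a converse bounding the optimal tradeoff of the network by $(N+1)(1-r)$, and an achievability showing the uniform-schedule quantize-map scheme attains it. For the converse I would use a genie argument. Hand the message to all $N$ relays before transmission begins and drop the half-duplex constraint, so that $S$ and $R_1,\dots,R_N$ may cooperate and transmit jointly to $D$ at all times. The resulting channel is an $(N+1)\times 1$ MISO channel to $D$ with i.i.d.\ $\mathcal{CN}(0,1)$ gains $h_{sd},h_{r_1d},\dots,h_{r_Nd}$, and it is more capable than the original network for every realization; hence its outage probability lower-bounds that of any scheme and its tradeoff $(N+1)(1-r)$ is an upper bound on the optimal DMT.

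For achievability, Theorem~\ref{thm:achievabilityGen} already shows that quantize-map attains the DMT of the cut-set value $\overline{C}_{hd}$ for any fixed schedule, so it suffices to compute the DMT of $\overline{C}_{hd}$ under $t_m=2^{-N}$ and show it is at least $(N+1)(1-r)$; together with the converse this forces equality and proves the uniform schedule optimal. I would lower-bound $\overline{C}_{hd}$ by fixing i.i.d.\ Gaussian inputs (DMT-optimal up to a constant), giving $\overline{C}_{hd}\ge\min_{\Omega}2^{-N}\sum_{m}I_m(\Omega)$ with $I_m(\Omega)=\log\det(I+\rho H_m(\Omega)H_m(\Omega)^\dagger)$ the mutual information crossing cut $\Omega$ in mode $m$. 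Writing $|h_{sd}|^2\doteq\rho^{-\gamma}$, $|h_{sr_j}|^2\doteq\rho^{-\alpha_j}$, $|h_{r_id}|^2\doteq\rho^{-\beta_i}$ with all exponents nonnegative, the outage exponent equals the minimum of $\gamma+\sum_{j=1}^N\alpha_j+\sum_{i=1}^N\beta_i$ over the outage region, and a union bound over the $2^N$ cuts gives $d(r)\ge\min_{\Omega}d_\Omega(r)$. This reduces the problem to a per-cut tradeoff computation.

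The core step is to show $d_\Omega(r)\ge(N+1)(1-r)$ for every cut. The structural fact I would exploit is that, because there are no relay-relay links, a cut placing relay set $\mathcal{A}$ on the source side and $\mathcal{B}=\mathcal{A}^c$ on the destination side involves exactly $N+1$ gains: the direct gain $\gamma$, the gains $\beta_i$ ($i\in\mathcal{A}$) from source-side relays to $D$, and the gains $\alpha_j$ ($j\in\mathcal{B}$) from $S$ to destination-side relays; the gains $\alpha_i$ ($i\in\mathcal{A}$) and $\beta_j$ ($j\in\mathcal{B}$) never cross this cut. For such a cut the averaged exponent $2^{-N}\sum_m e_m$ (where $e_m$ is the exponential order of $I_m$) is a convex, piecewise-linear function of these $N+1$ variables, and minimizing $\gamma+\sum_{j\in\mathcal{B}}\alpha_j+\sum_{i\in\mathcal{A}}\beta_i$ subject to the average not exceeding $r$ is a linear program. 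The clean statement I expect to prove is $2^{-N}\sum_m e_m\ge 1-\tfrac{1}{N+1}\big(\gamma+\sum_{j\in\mathcal{B}}\alpha_j+\sum_{i\in\mathcal{A}}\beta_i\big)$, which immediately yields $d_\Omega(r)\ge(N+1)(1-r)$, with equality attained by the two ``pure'' cuts $\mathcal{A}=\{1,\dots,N\}$ (the $(N+1)\times1$ MISO cut with each relay antenna active half the time) and $\mathcal{A}=\emptyset$ (the dual SIMO cut) at the symmetric point $\gamma=\alpha_j=\beta_i=1-r$.

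The hardest part will be verifying this inequality for the intermediate cuts, where $H_m$ is a genuine MIMO matrix: the destination and every listening destination-side relay share the source column, while the source-side relays contribute additional relay-to-$D$ columns, so $e_m$ is a maximum over sub-determinants rather than a single SISO term. One must track this source-column coupling across all $2^N$ modes and show that the extra $\log\det$ cross-terms only raise the averaged exponent, so that no intermediate cut ever becomes binding below the pure-cut value. Once the per-cut bound holds, the union bound and the genie converse together pin the optimal tradeoff at $(N+1)(1-r)$, achieved by quantize-map with the uniform schedule.
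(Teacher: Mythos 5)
Your overall architecture matches the paper's: converse via the full-cooperation $(N+1)\times 1$ MISO bound, achievability by invoking Theorem~\ref{thm:achievabilityGen} and then showing that the cut-set bound under the uniform schedule $t_m=2^{-N}$ has DMT at least $(N+1)(1-r)$, with a union bound over the $2^N$ cuts reducing everything to a per-cut statement. Up to that point the proposal is sound (and your genie converse is actually spelled out more explicitly than in the paper). But the per-cut statement itself --- your inequality $2^{-N}\sum_m e_m \ge 1 - \tfrac{1}{N+1}\bigl(\gamma + \sum_{j\in\mathcal{B}}\alpha_j + \sum_{i\in\mathcal{A}}\beta_i\bigr)$ --- is precisely the heart of the proof, and you present it as ``the clean statement I expect to prove'' while explicitly deferring ``the hardest part.'' That is a genuine gap: nothing in the proposal establishes it, and the LP/convexity framing does not by itself do the work, because the difficulty is not convexity but counting how often each link gain appears as the binding term across the $2^N$ modes.

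The paper fills this gap with two specific devices that are absent from your plan. First, it never analyzes the intermediate-cut MIMO matrices at all: for each mode $m$ it lower-bounds the cut value by discarding all but the \emph{strongest} transmitting relay on the source side and the \emph{strongest} receiving relay on the destination side, leaving a $2\times 2$ upper-triangular Z-channel whose log-det is at least $\max\{n_{sd},\, \max_{j\in V_m} n_{r_jd} + \max_{j\in W_m} n_{sr_j}\}$, as in \eqref{eq:z_cutvalue}. This single step eliminates the ``source-column coupling'' issue you identify as the hard part. Second, to show that this quantity averaged over modes dominates the arithmetic mean of the $N+1$ link exponents (Lemma~\ref{lem:cutValueGreaterAvg}), the paper uses a combinatorial lemma (Lemma~\ref{lem:ineq1}): the $2^N$ modes enumerate exactly the subsets of the $N$ relays, so the relay with the $k$-th smallest relevant gain is the maximum of its subset in $2^{k-1}$ modes, and Tchebychef's sum inequality applied to the similarly ordered sequences $\bigl(a,\max(a,s_1),\ldots,\max(a,s_n)\bigr)$ and $\bigl(2^{-n},2^{-n},2^{-n+1},\ldots,2^{-1}\bigr)$ converts that geometric weighting into the uniform $\tfrac{1}{N+1}$ average. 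Without these (or a substitute of equal strength), your claim that ``the extra $\log\det$ cross-terms only raise the averaged exponent'' is not a proof: the issue is not whether the cross-terms help, but whether the help, averaged with weight $2^{-N}$, is quantitatively enough to reach the $\tfrac{1}{N+1}$ fraction for every intermediate cut. Your check that the two pure cuts meet the bound at the symmetric point is a consistency check, not a verification for the cuts where the bound could in principle be violated.
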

\begin{proof}
See Appendix~\ref{app:Optimal_DMT_for_N_relay}
\end{proof}

\begin{figure}[h]
\begin{center}
\input{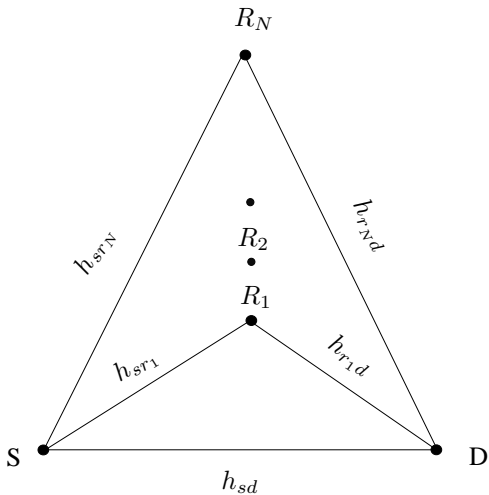}
\caption{Two-hop network with N half-duplex relays}
\label{fig:N_relay_channel}
\end{center}
\end{figure}

\appendices

\section{Achieving the MISO bound in the relay channel with $t = 0.5$}\label{app:AchMISO} From theorem~\ref{thm:achievability},
it is sufficient to show that DMT of $\overline{C}_{hd}$ is equal to
that of MISO.

For ease of computation define $n_{sd}:=\log(1 + |h_{sd}|^2\rho)$
and $\alpha_{sd}$ as its exponential order i.e.,

\[
\alpha_{sd} := \lim_{\rho\rightarrow \infty} \frac{\log(1 +
|h_{sd}|^2\rho)}{\log\rho}
\]

then the probability density function (pdf) of $\alpha_{sd}$ can be
shown to be

\begin{eqnarray*}\label{eq:alpha_pdf}
f_{\alpha_{sd}}(\alpha_{sd}) &=& \lim_{\rho\rightarrow \infty}
\exp(-\rho^{-(1-\alpha_{sd})}) \rho^{-(1-\alpha_{sd})} \log\rho\\
&=& \rho^{-(1-\alpha_{sd})}   \ \ \ \ \ \ 0 \leq \alpha_{sd} \leq 1
\end{eqnarray*}

Consider first term in inequality \eqref{eq:CutSetE}, using $t=0.5$

\begin{eqnarray*}
&&0.5\left(\log (1 + \rho(|h_{sr}|^2 + |h_{sd}|^2))\right) \\
&&+ \ 0.5\left(\log(1 + \rho |h_{sd}|^2\right) \\
&\doteq& 0.5\left(\max\{\log (1+\rho |h_{rd}|^2 ), \log (1+\rho
|h_{sd}|^2)\}\right) \\
&&+ \ 0.5\left(\log(1+\rho |h_{sd}|^2\right)\\
&=& n_{sd} + 0.5(n_{rd} - n_{sd})^+
\end{eqnarray*}
similarly second term can be simplified resulting in

\begin{eqnarray*}
\overline{C}_{hd} \doteq n_{sd} + 0.5\min\{(n_{sr} -
n_{sd})^+,(n_{rd} - n_{sd})^+\}
\end{eqnarray*}

For $R=r\log\rho$ the cut-set bound is in outage if
\begin{eqnarray*}
{n_{sd} + 0.5\min\{(n_{sr} - n_{sd})^+, (n_{rd} - n_{sd})^+\}
\leq r\log\rho}\\
i.e.,\alpha_{sd} + 0.5\min\{(\alpha_{sr} - \alpha_{sd})^+,
(\alpha_{rd} - \alpha_{sd})^+\}
\leq r\\
\end{eqnarray*}

\begin{eqnarray*}\label{eq:outage}
\therefore \mathcal{O}(r)  &=& \{ \alpha_{sd},\alpha_{sr},\alpha_{rd} \mid \alpha_{sd} \\
&&+ \ 0.5\min\{(\alpha_{sr} - \alpha_{sd})^+, (\alpha_{rd} -
\alpha_{sd})^+\} \leq r\}
\end{eqnarray*}

\begin{eqnarray*}
P_{\mathcal{O}}(r) &=& \int_{\overline{\alpha}\in
\mathcal{O}(r)}f_{\overline{\alpha}}(\overline{\alpha})d\overline{\alpha} \\
&=& \int_{\overline{\alpha}\in
\mathcal{O}(r)}f_{\alpha_{sd}}(\alpha_{sd})f_{\alpha_{sr}}(\alpha_{sr})f_{\alpha_{rd}}(\alpha_{rd})d\overline{\alpha}\\
&=& \int_{
\begin{array}{c}
  \overline{\alpha}\in
\mathcal{O}(r) \\
  0 \leq \overline{\alpha} \leq 1
\end{array}
} \rho^{-3 + (\alpha_{sd} + \alpha_{sr} + \alpha_{rd})} d\overline{\alpha} \\
&\doteq& \rho^{-d(r)}
\end{eqnarray*}
where

\begin{eqnarray*}
d(r) &=& \inf_{
\begin{array}{c}
(\alpha_{sd},\alpha_{sr},\alpha_{rd}) \in \mathcal{O}(r) \\
0\leq \alpha_{sd},\alpha_{sr},\alpha_{rd}\leq 1\\
\end{array}
} 3 - (\alpha_{sd} + \alpha_{sr} + \alpha_{rd})
\end{eqnarray*}

\begin{enumerate}
\item If $\alpha_{sd} \geq
\min\{\alpha_{sr},\alpha_{rd}\}$: Then outage implies
$\min\{\alpha_{sr},\alpha_{rd}\} \leq \alpha_{sd} \leq r$. And since
$\max\{\alpha_{sr},\alpha_{rd}\} \leq 1$ we have $(\alpha_{sd} +
\alpha_{sr} + \alpha_{rd}) \leq 1 + 2r$.

\item If $\alpha_{sd} \leq
\min\{\alpha_{sr},\alpha_{rd}\}$: Then Outage implies

\begin{eqnarray*}
\alpha_{sd} + 0.5(\min\{\alpha_{sr},\alpha_{rd}\} - \alpha_{sd})
&\leq& r\\
\alpha_{sd} + \min\{\alpha_{sr},\alpha_{rd}\} &\leq& 2r\\
\alpha_{sd} + \min\{\alpha_{sr},\alpha_{rd}\} +
\max\{\alpha_{sr},\alpha_{rd}\} &\leq& 1 + 2r
\end{eqnarray*}
\end{enumerate}
Therefore
\begin{eqnarray*}\label{eq:MISO}
d(r) &=& 3 - (1 + 2r)\nonumber\\
&=& 2(1-r)
\end{eqnarray*}
$\hfill{\blacksquare}$\\
 Thus quantize-map relaying scheme achieves the optimal DMT of
$2 \times 1$ MISO system.

\section{DMT for two-hop network with $N$ non-interfering half-duplex relays}\label{app:Optimal_DMT_for_N_relay}
We prove theorem~\ref{thm:achievability_N_relay} in two steps, we
first show that DMT of cut-set bound with fixed uniform scheduling
achieves DMT of $(N+1) \times 1$ MISO system and then apply
theorem~\ref{thm:achievability}.

\subsection{DMT of cut-set bound for fixed uniform scheduling}
In subsection~\ref{sec:DMT_N_relay} theorem~\ref{thm:achievability}
showed that for any fixed scheduling quantize-map relaying scheme
achieves the DMT of the cut-set for that scheduling. We make use of
this fact to show the achievablity of MISO performance in $N$ relay
case. First we note that since there are $N$ half-duplex relays,
each relay has a choice to be either in receiving mode or in
transmitting mode, accordingly we have $M=2^N$ states. Then
following the lead from our single relay case and using the fact the
everything in network is nicely symmetrical we operate network in
each of these states for equal amount of time i.e., $t_m = 2^{-N}$
for $m=1,\cdots,2^N$. Now if we show that DMT of cut-set for this
scheduling is equal to $(N+1)(1-r)$ we are done.

We first derive a lower bound on the cut-set. Any cut in a network
partitions all nodes into two groups $\Omega$ with $S\in \Omega$ and
its compliment $\Omega^c$ with $D\in \Omega^c$, each relay has a
choice of being in a either $\Omega$ or $\Omega_c$, thus we have
$2^N$ total possible cuts and the cut-set bound of network is equal
to the minimum of mutual information flowing through each of these
$2^N$ possible cuts.

\begin{figure}[h]
\begin{center}
\input{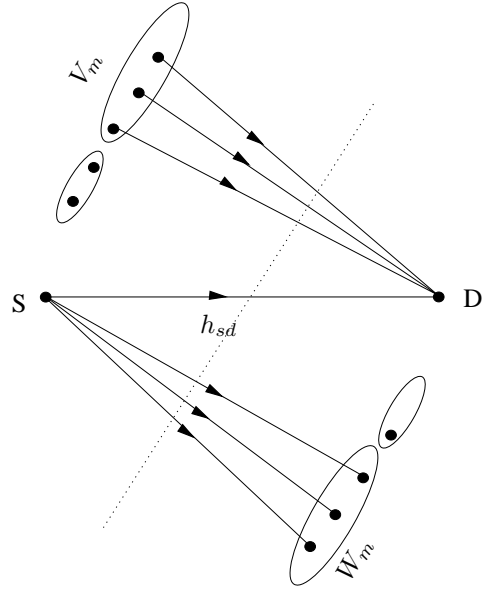}
\begin{center}
\caption{$m^{\text{th}}$ state of Network} \label{fig:N_relay_cut}
\end{center}
\end{center}
\end{figure}

\begin{figure}[h]
\begin{center}
\input{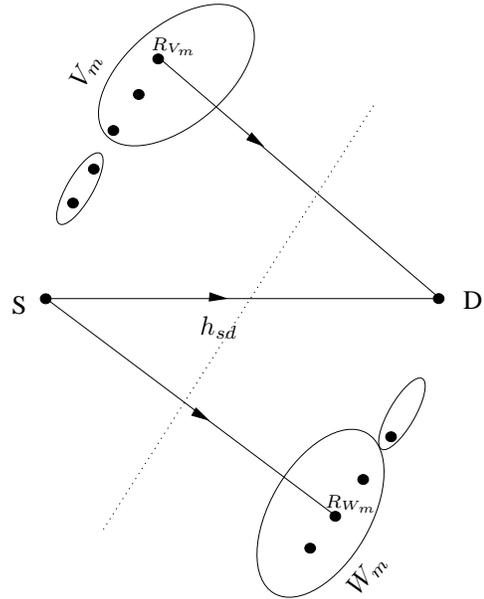}
\begin{center}
\caption{$Z-$ channel} \label{fig:Z_channel}
\end{center}
\end{center}
\end{figure}

Consider a cut $\Omega$ in the network which is operating in state
$m$ it looks as shown in fig. \ref{fig:N_relay_cut}. Let $V_m
\subseteq \Omega - S$ be the set of relays $R_j \in \Omega$ which
are transmitting and $W_m \subseteq \Omega^c - D$ be the set of
relays $R_j \in \Omega^c$ which are receiving in state $m$. Let
$R_{V_m}$ be a relay with strongest channel say $h^{*}_{rd} :=
\max_j\{h_{r_jd}\} \ j \in V_m$ to the destination and analogously
let $R_{W_m}$ be a relay with strongest channel say
$h^{*}_{sr}:=\max_j\{h_{sr_j}\} \ j \in W_m$ from source. We can
lower bound the total mutual information flowing across this cut in
fig~\ref{fig:N_relay_cut} by the the mutual information flowing
across the same cut $\{S,R_{V_m}\} \{R_{W_m},D\}$ in the Z-channel
formed by these nodes, see fig~\ref{fig:Z_channel}. This Z-channel
can be viewed as MIMO system with upper triangular channel matrix
$H= \left(
      \begin{array}{cc}
        h^{*}_{rd} & h_{sd} \\
        0 & h^{*}_{sr} \\
      \end{array}
    \right)$
So mutual information flow across this cut in Z-channel is given by
\begin{eqnarray*}
&&\log\det\left(I_{2\times2} + \rho HH^{\dag}\right) \\
&=& \log\left( 1 + \rho (|h^{*}_{rd}|^2 + |h_{sd}|^2 +
|h^{*}_{sr}|^2 + \rho|h^{*}_{rd}|^2 |h^{*}_{sr}|^2)\right)\\
&\geq& \max\{\log(1 + \rho |h_{sd}|^2), \\
&& \log\left((1+\rho|h^{*}_{sr}|^2)(1 + \rho |h^{*}_{rd}|^2)\right)\}\\
&=& \max\{\log(1 + \rho |h_{sd}|^2), \log(1+\rho \max_{j\in
V_m}(|h_{r_jd}|^2))\\
&& +
\log(1 + \rho \max_{j\in W_m}(|h_{sr_j}|^2)) \}\\
&=& \max\{n_{sd}, \max_{j\in V_m}(n_{r_jd}) +
\max_{j\in W_m}(n_{sr_j})) \}\\
\end{eqnarray*}

Thus for each cut $\Omega$ the cut value is, \beq
\label{eq:z_cutvalue} \overline{C}_{\Omega} \geq \frac{1}{2^N}
\sum_{i=1}^{2^N} \max\{n_{sd}, \max_{j\in V_m}(n_{r_jd}) +
\max_{j\in W_m}(n_{sr_j})\}
\end{equation}

Now the cut-set bound is simply, \beq
\label{eq:cut_set_N_relay}\overline{C}_{hd}\geq \min_{\Omega}
\overline{C}_{\Omega} \eeq

\begin{lemma}\label{lem:cutValueGreaterAvg}
For any cut $\Omega$, there are $N+1$ distinct links flowing across
the cut and the mutual information flowing through it given by
\eqref{eq:z_cutvalue} can be further lower bounded by their average
\beq \overline{C}_{\Omega} \geq \frac{n_{sd}+\sum_{j \in
\Omega-\{S\}} n_{r_jd}+\sum_{j \in \Omega^c-\{ D\}} n_{sr_j} }{N+1}
\eeq
\begin{proof}
See Appendix~\ref{sec:proof_lemma}
\end{proof}
\end{lemma}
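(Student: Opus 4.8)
The plan is to start from the state-averaged lower bound \eqref{eq:z_cutvalue} and show that averaging the per-state cut values over the $2^N$ equally-likely states dominates the uniform average of the $N+1$ link capacities $n_{sd}$, $\{n_{r_jd}\}_{j\in\Omega-\{S\}}$ and $\{n_{sr_j}\}_{j\in\Omega^c-\{D\}}$. Since \eqref{eq:cut_set_N_relay} and \eqref{eq:z_cutvalue} already reduce the cut-set bound to this average, the whole lemma comes down to a purely combinatorial averaging inequality.

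First I would record the structure of the states. Writing $A:=\Omega-\{S\}$ and $B:=\Omega^c-\{D\}$ for the relays on the two sides of the cut ($|A|+|B|=N$), a state is a choice of transmitting relays $T\subseteq\{1,\dots,N\}$, and for this cut $V_m=A\cap T$ while $W_m=B\setminus T$. As $T$ ranges over all $2^N$ subsets, $V_m$ is a uniform subset of $A$ and $W_m$ a uniform subset of $B$, the two independent; equivalently each relay link ($n_{r_jd}$ for $j\in A$, $n_{sr_j}$ for $j\in B$) is activated independently with probability $1/2$, while the direct link $n_{sd}$ is present in every state. (With the convention $\max_{\emptyset}(\cdot)=0$, formula \eqref{eq:z_cutvalue} is valid even when $V_m$ or $W_m$ is empty.) Next I would relax the per-state value to a single maximum: since all capacities are nonnegative, $\max_{j\in V_m}n_{r_jd}+\max_{j\in W_m}n_{sr_j}\ge\max\{\max_{j\in V_m}n_{r_jd},\max_{j\in W_m}n_{sr_j}\}$, so the inner quantity in \eqref{eq:z_cutvalue} is at least $M_m$, the maximum of $n_{sd}$ together with all relay links activated in state $m$. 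It then suffices to prove: with one distinguished nonnegative value (here $n_{sd}$) always present and the other $N$ nonnegative values each present independently with probability $1/2$, the average of the resulting maximum is at least $1/(N+1)$ times the sum of all $N+1$ values.

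The crux is this averaging inequality, which I would prove by an exact evaluation and two Abel summations. Sort the $N+1$ values as $u_1\ge\cdots\ge u_{N+1}$, let $p$ be the rank of the always-present value $n_{sd}$, and write $\frac{1}{2^N}\sum_m M_m=\sum_{\ell=1}^{N+1}(u_\ell-u_{\ell+1})\,\pi_\ell$ (with $u_{N+2}:=0$), where $\pi_\ell$ is the fraction of states whose maximum is at least $u_\ell$. The top $\ell$ values contain $n_{sd}$ exactly when $\ell\ge p$, so $\pi_\ell=1$ for $\ell\ge p$ and $\pi_\ell=1-2^{-\ell}$ for $\ell<p$. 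A first Abel summation collapses the expression to $\sum_{\ell<p}2^{-\ell}u_\ell+2^{-(p-1)}u_p$, a convex combination of $u_1,\dots,u_p$ whose weights are front-loaded onto the largest values. Comparing this with $\frac{1}{N+1}\sum_\ell u_\ell$ by a second Abel summation reduces the claim to nonnegativity of the partial-sum differences $C_L=(\text{total weight on }u_1,\dots,u_L)-\tfrac{L}{N+1}$: for $L\ge p$ one gets $C_L=1-\tfrac{L}{N+1}\ge0$, and for $L<p$ one gets $C_L=1-2^{-L}-\tfrac{L}{N+1}$, which is nonnegative because $\tfrac{L}{N+1}\le\tfrac{L}{L+1}$ and $1-2^{-L}\ge\tfrac{L}{L+1}$ is exactly the elementary bound $2^L\ge L+1$.

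The main obstacle is securing the sharp constant $1/(N+1)$ in this last step. The inequality is tight (equality holds when all $N+1$ link capacities coincide), so any crude per-state estimate of the maximum — for instance replacing $M_m$ by a fixed convex combination of $n_{sd}$ and the activated links — loses a constant factor and fails. The decisive structural fact, made precise by the two Abel summations, is that uniform averaging over the $2^N$ states places geometrically decreasing weight $2^{-\ell}$ on the $\ell$-th largest link, and this front-loaded weighting dominates the uniform weight $1/(N+1)$ precisely because $2^L\ge L+1$. Chaining the three steps, $\overline{C}_\Omega$ is at least the state average in \eqref{eq:z_cutvalue}, which is at least $\frac{1}{2^N}\sum_m M_m$, which is at least $\frac{1}{N+1}\big(n_{sd}+\sum_{j\in\Omega-\{S\}}n_{r_jd}+\sum_{j\in\Omega^c-\{D\}}n_{sr_j}\big)$, which is the claimed bound.
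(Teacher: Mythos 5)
Your proof is correct, and it shares the paper's two-stage skeleton: first relax each state's cut value $\max\{n_{sd},\,\max_{j\in V_m} n_{r_jd} + \max_{j\in W_m} n_{sr_j}\}$ to the single maximum of $n_{sd}$ and the relay links active in that state, then show that the uniform average over the $2^N$ states of this maximum dominates $\tfrac{1}{N+1}$ times the sum of the $N+1$ link gains. Where you genuinely diverge is in how this averaging inequality is established; the paper isolates it as Lemma~\ref{lem:ineq1}. The paper evaluates the average by grouping states according to the largest active relay link, which produces the same geometric weights $2^{-1},2^{-2},\ldots$ that you obtain from the layer-cake identity and your first Abel summation, and then closes the argument in one stroke by applying Tchebychef's sum inequality to the similarly ordered sequences $(a,\max(a,s_1),\ldots,\max(a,s_n))$ and $(2^{-n},2^{-n},2^{-n+1},\ldots,2^{-1})$. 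You instead close it with a second Abel summation, reducing the comparison of geometric against uniform weights to nonnegativity of the partial-sum gaps $C_L$, i.e.\ to the elementary bound $2^L \geq L+1$. Your route is fully self-contained (no appeal to a classical named inequality) and makes explicit exactly why the constant $\tfrac{1}{N+1}$ is the right one; the paper's route is shorter once Tchebychef's inequality is granted. One cosmetic point: your formula $\pi_\ell = 1-2^{-\ell}$ for $\ell < p$ is exact only when the link gains are distinct; with ties $\pi_\ell$ can be strictly larger, but this is harmless since $1-2^{-\ell}$ remains a valid lower bound on $\pi_\ell$ and the corresponding coefficient $u_\ell - u_{\ell+1}$ vanishes at a tie, so your chain of inequalities stands.
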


\subsection{Optimality of DMT of each cut $\overline{C}_{\Omega}$}
 Following Appendix~\ref{app:AchMISO} for each $j$, we define
$\alpha_{sd},\alpha_{r_jd},\alpha_{sr_j}$ as exponential order's of
$n_{sd},n_{r_jd},n_{sr_j}$ respectively.

From lemma~\ref{lem:cutValueGreaterAvg} outage is equal to set

\begin{equation}\label{eq:outage_N_relay}
\mathcal{O}(r) = \{ \overline{\alpha}\mid \alpha_{sd}+\sum_{j \in
\Omega-\{S\}} \alpha_{r_jd}+\sum_{j \in \Omega^c-\{ D\}}
\alpha_{sr_j} \leq (N+1)r \}
\end{equation}

\begin{eqnarray*}
P_{\mathcal{O}}(r) &=& \int_{\overline{\alpha}\in
\mathcal{O}(r)}f_{\overline{\alpha}}(\overline{\alpha})d\overline{\alpha}\\
&=& \int_{
\begin{array}{c}
  \overline{\alpha}\in
\mathcal{O}(r) \\
  0 \leq \overline{\alpha}\leq 1
\end{array}
} \rho^{-(N+1)}\\
&& . \rho^{\alpha_{sd}+\sum_{j \in \Omega-\{S\}}\alpha_{r_jd} +
\sum_{j \in
\Omega^c-\{ D\}} \alpha_{sr_j}}d\overline{\alpha} \\
&\doteq& \rho^{-d(r)}
\end{eqnarray*}
where

\begin{eqnarray*}
d(r) &=& \inf_{
\begin{array}{c}
  \overline{\alpha}\in
\mathcal{O}(r) \\
  0 \leq \overline{\alpha}\leq 1
\end{array}
} (N+1)-\left(\alpha_{sd}+\sum_{j \in \Omega-\{S\}}
\alpha_{r_jd} \right.\\
&&  \left.\hspace*{1.8in} + \sum_{j \in \Omega^c-\{ D\}} \alpha_{sr_j}\right)\\
&=& (N+1)(1-r)
\end{eqnarray*}
last equality follows from the equation~\ref{eq:outage_N_relay}. Now
since each cut has optimal DMT, from inequality
\eqref{eq:cut_set_N_relay} it is clear that cut-set bound also
achieves optimal DMT. And then we use
theorem~\ref{thm:achievability}.

\subsection{Proof of
Lemma~\ref{lem:cutValueGreaterAvg}}\label{sec:proof_lemma} To prove
this, first we show the following lemma,
\begin{lemma}
\label{lem:ineq1} Consider a set of numbers $a, s_1,\ldots,s_{n}$.
Assume function $f$ is such that for any set $V \subseteq
\{1,\ldots,n \}$ we have, \beq f(V) \geq \max(a,s_V) \eeq where \beq
s_V=\{ s_i| i \in V\}\eeq Then \beq   \frac{1}{2^n} \sum_{V
\subseteq \{1,\ldots,n \}} f(V) \geq \frac{a+\sum_{i=1}^ns_i}{n+1}
\eeq
\end{lemma}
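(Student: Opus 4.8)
The plan is to prove the inequality for the single specific choice $f(V)=\max(a,\max_{i\in V}s_i)$, since the hypothesis $f(V)\ge\max(a,s_V)$ makes every summand on the left at least this large (here I read $\max(a,s_V)$ as $\max\bigl(a,\max_{i\in V}s_i\bigr)$, with the convention that an empty index set contributes nothing, so $M(\emptyset)=a$). Writing $M(V)=\max(a,\max_{i\in V}s_i)$, the target becomes $\frac{1}{2^n}\sum_V M(V)\ge\frac{a+\sum_i s_i}{n+1}$, i.e.\ $\mathbb{E}[M(V)]\ge\frac{a+\sum_i s_i}{n+1}$ for a uniformly random subset $V$.

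The decisive first step I would take is to evaluate this average in closed form instead of manipulating the $2^n$ terms directly. After relabelling so that $s_1\ge s_2\ge\cdots\ge s_n$, one has $\max_{i\in V}s_i=s_{\min V}$, so conditioning on the smallest index of $V$ (which equals $j$ with probability $2^{-j}$, while $V=\emptyset$ occurs with probability $2^{-n}$) gives
\[
\mathbb{E}[M(V)]=\sum_{j=1}^n 2^{-j}\max(a,s_j)+2^{-n}a = a+\sum_{j=1}^n 2^{-j}(s_j-a)^+ .
\]
Since also $a+\sum_i s_i=(n+1)a+\sum_j(s_j-a)$, the claim reduces to $\sum_{j=1}^n 2^{-j}(s_j-a)^+\ge\frac{1}{n+1}\sum_{j=1}^n(s_j-a)$. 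Using $x\le x^+$ on the right, it then suffices to prove the cleaner statement $\sum_{j=1}^n 2^{-j}e_j\ge\frac{1}{n+1}\sum_{j=1}^n e_j$ for the nonnegative, nonincreasing sequence $e_j:=(s_j-a)^+$.

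The final step handles this weighted-average comparison by summation by parts. Setting $w_j=2^{-j}-\frac{1}{n+1}$, I want $\sum_j w_j e_j\ge 0$ for every nonincreasing nonnegative $(e_j)$; with $e_{n+1}:=0$, Abel summation rewrites this as $\sum_{k=1}^n W_k(e_k-e_{k+1})$ where $W_k=\sum_{j\le k}w_j$, so because each $e_k-e_{k+1}\ge 0$ it is enough that every partial sum $W_k=(1-2^{-k})-\frac{k}{n+1}$ is nonnegative, i.e.\ $(n+1)(1-2^{-k})\ge k$ for $1\le k\le n$; since $n\ge k$ this follows from $(k+1)(1-2^{-k})\ge k$, which is just the elementary bound $2^k\ge k+1$. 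I expect the closed-form evaluation of the average and this partial-sum verification to be the crux. In particular, the naive alternative of inducting on $n$ by splitting subsets according to whether $n\in V$ appears to lose too much — applying the inductive bound to the subsets avoiding $n$ discards the contribution of the largest $s_j$ and yields a right-hand side that is actually smaller than the target — which is precisely why pinning down the exact distribution of the largest selected value, rather than recursing, is the move that makes the argument go through (and, incidentally, shows the lemma holds for arbitrary real $a,s_1,\dots,s_n$, not merely nonnegative ones).
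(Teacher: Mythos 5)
Your proof is correct, and its first half coincides in substance with the paper's: both arguments lower-bound $f(V)$ by $\max(a,s_V)$ and group the $2^n$ subsets by their extremal index, arriving at the same geometrically weighted sum $2^{-n}a+\sum_{j=1}^n 2^{-j}\max(a,s_j)$ (the paper orders the $s_i$ ascending and groups by the largest index of $V$; you order them descending and condition on the smallest index — identical computations). Where you genuinely diverge is the crux step, comparing this weighted average against the uniform average $\frac{1}{n+1}\left(a+\sum_j \max(a,s_j)\right)$. The paper dispatches this in one line by invoking Tchebychef's (Chebyshev's) sum inequality for the similarly ordered value sequence $(a,\max(a,s_1),\ldots,\max(a,s_n))$ and weight sequence $(2^{-n},2^{-n},2^{-n+1},\ldots,2^{-1})$, and then drops the maxima via $\max(a,s_j)\ge s_j$. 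You instead re-center at $a$ by writing $\max(a,s_j)=a+(s_j-a)^+$, reduce to the statement $\sum_j 2^{-j}e_j\ge\frac{1}{n+1}\sum_j e_j$ for a nonincreasing nonnegative sequence $e_j$, and prove that by Abel summation together with the partial-sum check $(n+1)(1-2^{-k})\ge k$, which collapses to $2^k\ge k+1$. Your route is more self-contained — it uses no named inequality, only elementary manipulations — at the cost of two extra reductions; the paper's is shorter but treats Chebyshev's inequality as a black box. One small caveat: your closing parenthetical suggests your argument is distinguished by working for arbitrary real (not merely nonnegative) inputs, but the paper's proof needs no nonnegativity either, since Chebyshev's sum inequality only requires the two sequences to be similarly ordered; so that is not a real point of difference.
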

\begin{proof}
Without loss of generality assume that $s_i$'s are ordered (i.e.
$s_1 \leq s_2 \leq \ldots \leq s_n$). Then we have
\begin{eqnarray*}
&&\frac{1}{2^n} \sum_{V \subseteq \{1,\ldots,n \}} f(V) \\
& \geq & \frac{a+\max(a,s_1) + \ldots+2^{n-1}\max(a,s_n)}{2^n} \\
 &\stackrel{*}{ \geq }& \frac{a+\max(a,s_1)+\ldots+ \max(a,s_n)}{n+1} \\
& \geq & \frac{a+s_1+s_2+\ldots+s_n}{n+1}
\end{eqnarray*}
where $*$ is true by applying two sequences $(a,s_1,\ldots,s_m)$ and
$(2^{-n},2^{-n},2^{-n+1},\ldots,2^{-1})$ to the Tchebychef's
inequality,

\textbf{Tchebychef's inequality:} Assume two sequences
$(a_1,\ldots,a_n)$ and $(b_1,\ldots,b_n)$ are similarly ordered
(i.e. $(a_u-a_v)(b_u-b_v)\geq 0$, for all $u$ and $v$). Then \beq
\frac{1}{n} \sum_{i=1}^n a_ib_i \geq \lp \frac{1}{n} \sum_{i=1}^n
a_i \rp \lp\frac{1}{n} \sum_{i=1}^n b_i \rp \eeq

\end{proof}

Now we prove Lemma \ref{lem:cutValueGreaterAvg}.
\begin{proof} (proof of Lemma \ref{lem:cutValueGreaterAvg})\\
First note that for any $V_m \subseteq \Omega-\{S\}$ and $W_m
\subseteq \Omega^c-\{D\}$ we have

\begin{eqnarray*}
f(V_m,W_m)&=& \max \lp n_{sd}, \max_{i \in V_m} \lp n_{r_id}
\rp+\max_{i\in W_m}\lp n_{sr_i}\rp \rp\\
&\geq& \max(n_{sd},n_{3V_m},n_{2W_m})
\end{eqnarray*}
where
\begin{eqnarray*}
n_{3V_m}&=& \{n_{r_id}|i \in V_m \}\\
n_{2W_m}&=& \{n_{sr_j}|j \in W_m \}\\
\end{eqnarray*}
Now by Lemma \ref{lem:ineq1} we know that
\begin{eqnarray*}
&&\frac{1}{2^N} \sum_{V_m \subseteq \Omega-\{S\}} \sum_{W_m
\subseteq
\Omega^c-\{D\}} f(V_m,W_m) \\
&\geq&  \frac{n_{sd}+\sum_{i \in \Omega-\{S\}} n_{r_id}+\sum_{i \in
\Omega^c-\{ D\}} n_{sr_i}  }{N+1}
\end{eqnarray*} hence the proof is complete.
\end{proof}

\bibliographystyle{IEEEbib}

\end{document}